\newtheorem{lemma}{Lemma}
\newtheorem{theorem}{Theorem}
\newtheorem{remark}{Remark}
\newcommand{\eref}[1]{(\ref{#1})}
\newcommand{\sref}[1]{Section~\ref{#1}}
\newcommand{\appref}[1]{Appendix~\ref{#1}}
\newcommand{\fref}[1]{Figure~\ref{#1}}
\newcommand{\cref}[1]{Constraint~\ref{#1}}
\newcommand{\thref}[1]{Theorem~\ref{#1}}
\newcommand{\lref}[1]{Lemma~\ref{#1}}
\newcommand{\algref}[1]{Algorithm~\ref{#1}}
\begin{document}

\title{Coordinated Scheduling for the Downlink of Cloud Radio-Access Networks}

\author{
Ahmed Douik, \textit{Student Member, IEEE}, Hayssam Dahrouj, \textit{Member, IEEE},\\ Tareq Y. Al-Naffouri, \textit{Member, IEEE}, and Mohamed-Slim Alouini, \textit{Fellow, IEEE}
\thanks {Ahmed Douik, Hayssam Dahrouj, and Mohamed-Slim Alouini are with Computer, Electrical and Mathematical Sciences and Engineering (CEMSE) Division at King Abdullah University of Science and Technology (KAUST), Thuwal, Makkah Province, Saudi Arabia. E-mail: \{ahmed.douik, hayssam.dahrouj, slim.alouini\}@kaust.edu.sa.

Tareq Y. Al-Naffouri is with both the CEMSE Division at King Abdullah University of Science and Technology (KAUST), Thuwal, Makkah Province, Saudi Arabia, and the Electrial Engineering Department at King Fahd University of Petroleum and Minerals (KFUPM), Dhahran, Eastern Province, Saudi Arabia. E-mail: tareq.alnaffouri@kaust.edu.sa.
}
\vspace{-.5cm}
}

\maketitle

\begin{abstract}

This paper addresses the coordinated scheduling problem in cloud-enabled networks. Consider the downlink of a cloud-radio access network (C-RAN), where the cloud is only responsible for the scheduling policy and the synchronization of the transmit frames across the connected base-stations (BS). The transmitted frame of every BS consists of several time/frequency blocks, called power-zones (PZ), maintained at fixed transmit power. The paper considers the problem of scheduling users to PZs and BSs in a coordinated fashion across the network, by maximizing a network-wide utility under the practical constraint that each user cannot be served by more than one base-station, but can be served by one or more power-zone within each base-station frame. The paper solves the problem using a graph theoretical approach by introducing the scheduling graph in which each vertex represents an association of users, PZs and BSs. The problem is formulated as a maximum weight clique, in which the weight of each vertex is the benefit of the association represented by that vertex. The paper further presents heuristic algorithms with low computational complexity. Simulation results show the performance of the proposed algorithms and suggest that the heuristics perform near optimal in low shadowing environments.\end{abstract}

\begin{keywords}
Coordinated scheduling, maximum weight clique problem, optimal and near optimal scheduling.
\end{keywords}

\IEEEpeerreviewmaketitle

\thispagestyle{empty}

\section{Overview} \label{sec:intro}

\subsection{Introduction}

The continuous increasing demand for high data rate services necessitates breakthroughs in network system architecture. With a progressive move towards full spectrum reuse and a positive trend in small-cell deployment, cloud-radio access networks (CRAN) become essential in large-scale interference management for next generation wireless systems (5G) \cite{Andrews_Buzzi_Choi_Hanly_Lozano_Soong_Zhang}. Through its ability to allocate resources in a coordinated way across base-stations, cloud-enabled networks have the potential of mitigating inter-base-station interference through inter-base-station coordination. This paper investigates the coordinated scheduling problem in a cloud-radio access network, where base-stations are connected to a central processor (cloud) which is responsible for scheduling users to base-stations resource blocks.

Recent literature on CRAN assumes signal-level coordination and allows joint signal processing in the cloud \cite{Park_Simeone_Sahin_Shamai_JSP,Dai_Yu_Globecom13,6786060}. Such coordination, however, requires high-capacity links to share all data streams between all base-stations, and needs and a substantial amount of backhaul communications. This paper considers the CRAN problem from a different perspective, as it only considers \textit{scheduling-level coordination} at the cloud, which is more practical to implement, and at the same time allows base-stations to schedule users efficiently.

Consider the downlink of cloud-radio access network comprising several base-stations connected to one central processor (the cloud),  which is only responsible for the scheduling policy and the synchronization of the transmit frames of all base-stations. The frame structure of every base-station consists of several resource blocks, maintained at fixed transmit power, called power-zones. Across the network, users are multiplexed across the power-zones under the constraint that each user cannot be connected to more than one base-station since, otherwise, signal-level coordination between base-stations is needed. Each user, however, can be connected to several power-zones belonging to the frame of one base-station. Further, each power-zone, which can be in practice seen as a generic term to denote time/frequency resource block of every BS, serves one and only one user. The coordinated scheduling problem, under fixed power transmission, becomes that of optimally scheduling users to base-stations and their power-zones subject to the above practical constraints, as a means to mitigate inter-base-station interference. The paper considers the scheduling problem with an objective of maximizing a generic network-wide utility, where scheduling decisions are carried out by the cloud and coordinated to the base-stations.

In the past literature, scheduling is often performed on a per-base-station basis, given a pre-assigned association of users and base-stations, e.g., the classical proportionally fair scheduling \cite{6525475,Stolyar_Viswanathan,yu_TWC_samsung}. Unlike the previous works where scheduling is performed with no inter-BS coordination, this paper considers the network-wide scheduling where coordination is carried by the cloud connecting the base-stations. The coordinated scheduling considered in this paper is particularly related to the concept developed in \cite{6811617} in a soft-frequency reuse setup; however, the problem setup in \cite{6811617} assumes an equal number of users and power-zones and boils down to a simple linear assignment problem, which can be solved using the classical auction methodology \cite{Bertsekas1}.

This paper main contribution is that it solves the coordinated scheduling for any number of users and power-zones by maximizing a network-wide utility subject to practical cloud-radio access network constraints. The paper solves the problem using a graph theory approach by introducing the corresponding \emph{scheduling graph} and reformulating the problem as a maximum weight clique problem, which can be globally solved using efficient algorithms \cite{16513519,13265492}. The paper further proposes heuristic algorithms with low computational complexity. Simulation results show the performance of the proposed algorithms, and suggest that the heuristic algorithm performs near optimal for low shadowing environment.

The rest of this paper is organized as follows: In \sref{sec:sys}, the system model and the problem formulation are presented. \sref{sec:sol} presents the scheduling graph, the optimal solution and the heuristic solutions of the problem. Simulation results are shown in \sref{sec:sim}. \sref{sec:conc} contains the concluding remarks.

\subsection{Notations}

Let $\mathcal{X}$ be a set. We denote by $|\mathcal{X}|$ the cardinality of $\mathcal{X}$, and $\mathcal{P}(\mathcal{X})$ its power set. Let $\mathcal{A}$ and $\mathcal{B}$ be two sets. The set denoted by $\mathcal{A} \times \mathcal{B}$ represents the Cartesian product of $\mathcal{A}$ and $\mathcal{B}$. Finally, let $\delta(.)$ be the discrete Dirac function, i.e. $\delta(x)$ is 1 if $x=0$, and $\delta(x)$ is 0 if $x\neq 0$.

\section{System Model and Problem Formulation} \label{sec:sys}

\subsection{System Model}
\begin{figure}[t]
\centering
  \includegraphics[width=.87\linewidth]{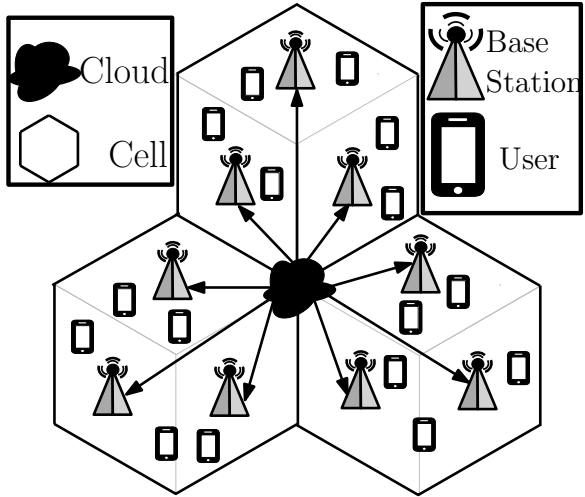}
  \caption{Network configuration.}\label{fig:network}
\end{figure}
\begin{figure}[t]
\centering
  \includegraphics[width=.87\linewidth]{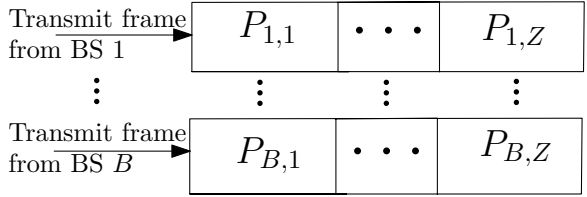}
  \caption{Frame structure.}\label{fig:frame}
\end{figure}
Consider the downlink of a cloud radio wireless network of $B$ BSs connected to a central cloud serving $U$ users in total, as shown in \fref{fig:network}, which shows a CRAN formed by B=9 BSs and U=16 users. Let $\mathcal{B}$ be the set of all BSs in the system and $\mathcal{U}$ be the set of all users ($|\mathcal{B}|= B$ and $|\mathcal{U}|= U$). The frame of each base-station consists of $Z$ time/frequency resource blocks (called herein PZs), which are maintained at fixed transmit power. Let $\mathcal{Z}$ be the set of PZs of the frame of a BS, ($|\mathcal{Z}| = Z$). Each PZ $z$ in BS $b$'s frame is maintained at a fixed transmit power $P_{bz}$, $\forall \ b \in \mathcal{B}$, and $\forall \ z \in \mathcal{Z}$, as shown in \fref{fig:frame}. The value of $P_{bz}$, typically, needs to be updated in an outer power optimization step, but this falls outside the scope of the current paper which focuses on the scheduling optimization step only.

The total number of available PZs is $Z_{\text{tot}} = |\mathcal{B}| \times Z$. The cloud connecting the different BSs guarantees that the transmission of the different frames are synchronized across all BSs. Let $h_{bz}^{u} \in \mathds{C},\ \forall \ u \in \mathcal{U},\ \forall \ b \in \mathcal{B},\ \forall\ z \in \mathcal{Z}$ be the channel from the $b$th BS to user $u$ when user $u$ is assigned to PZ $z$. The corresponding signal-to-interference plus noise-ratio (SINR) of user $u$ when it is associated with power-zone $z$ of BS $b$ can then  be written as:
\begin{align}
\text{SINR}_{bz}^{u} = \cfrac{P_{bz} |h_{bz}^{u}|^2}{\Gamma(\sigma^2+ \sum_{b^\prime \neq b}P_{b^\prime z}|h_{b^\prime z}^{u}|^2)},
\end{align}
where $\sigma^2$ is the Gaussian noise variance, and $\Gamma$ denotes the SINR gap.

\subsection{Problem Formulation}
This paper considers the problem of assigning the users to the PZs of each BS frame for a fixed transmit power under the constraints that:
\begin{itemize}
\item C1: Each user can connect at most to one BS, but possibly to many PZs in that BS.
\item C2: Each PZ should be allocated to one and exactly one user.
\end{itemize}

Let $a_{ubz}$ be the benefit of assigning user $u$ to PZ $z$ of the $b$th BS. Let $X_{ubz}$ be a binary variable which is 1 if user $u$ is mapped to the $z$th PZ of the $b$th BS, and zero otherwise. Further, let $Y_{ub}$ be a binary variable which is 1 if user $u$ is mapped to the $b$th BS, and zero otherwise. This paper considers the following generic network-wide optimization problem:

\begin{eqnarray}
\label{Original_optimization_problem}
& \max & \sum_{u,b,z}a_{ubz}X_{ubz} \\
\label{constraint2}&  {\rm s.t.\ }  &Y_{ub} = \min\bigg(\sum_{z}X_{ubz},1\bigg), \forall (u,b)\in\mathcal{U} \times \mathcal{B},\\
\label{constraint1}& & \sum_{b}Y_{ub}\leq 1,\quad \forall u\in\mathcal{U},\\
\label{constraint3}& & \sum_{u}X_{ubz}=1, \quad\forall (b,z)\in\mathcal{B}\times\mathcal{Z},\\
\label{constraint4}& & X_{ubz},Y_{ub} \in \{0,1\}, \forall (u,b,z)\in\mathcal{U}\times \mathcal{B}\times\mathcal{Z},
\end{eqnarray}
where the optimization is over the binary variables $X_{ubz}$ and $Y_{ub}$, where the constraints in (\ref{constraint2}) and (\ref{constraint1}) correspond to constraint C1, and where the equality constraint in (\ref{constraint3}) corresponds to constraint C2. Finding the global optimal to the discrete optimization problem (\ref{Original_optimization_problem}) may involve searching over all possible user-to-power-zone assignments, which is clearly infeasible for any reasonably sized network. In the next section, the paper solves problem (\ref{Original_optimization_problem}) using graph theory techniques by introducing the corresponding scheduling graph in which each vertex represents an association of users, PZs and BSs, and then by reformulating (\ref{Original_optimization_problem}) as a maximum weight clique problem, which can be globally solved using existing efficient solvers, e.g., \cite{16513519,13265492}.

\section{Coordinated Scheduling} \label{sec:sol}
This sections presents the optimal solution of problem (\ref{Original_optimization_problem}). The solution hinges upon the fact that problem (\ref{Original_optimization_problem}) can be reformulated as a maximum weight clique problem. The section first shows how to build the corresponding scheduling graph, and then reformulates the problem. It also presents efficient heuristics to solve the scheduling problem. Note that all the scheduling methods presented in this paper are centralized in nature. The scheduling solutions are carried out by the centralized processor at the cloud, and coordinated to the base-stations.

\subsection{Construction of Scheduling Graph}

Let $\mathcal{A}$ be the set of all possible associations between users, base-stations, and power-zones, i.e. $\mathcal{A} = \mathcal{U} \times \mathcal{B} \times \mathcal{Z}$. Define $\varphi_u$ as the mapping function from the set $\mathcal{A}$ to the set of users $\mathcal{U}$, i.e. $\varphi_u(y)=u$, $\forall \ y=(u,b,z) \in \mathcal{A}$. In other words, for each association $y \in \mathcal{A}$, the function $\varphi_u$ returns the index of the user considered in the association. Similarly, define $\varphi_b$ and $\varphi_z$ as the mapping functions from the set $\mathcal{A}$ to the sets of BSs $\mathcal{B}$ and PZs $\mathcal{Z}$, respectively, i.e. $\varphi_b(y)=b$ and $\varphi_z(y)=z$ $\forall y=(u,b,z) \in \mathcal{A}$.

The power-set of $\mathcal{A}$, $\mathcal{P}(\mathcal{A})$, representing all possible associations between users, base-stations, and power-zones is also the set of all schedules, i.e., regardless if the schedules satisfy the constraints C1 and C2 or not. Let $\mathbf{S} \in \mathcal{P}(\mathcal{A})$ be any such schedule. $\mathbf{S}$ can be written as $\mathbf{S}=\{s_1,\ \cdots,\ s_{|\mathbf{S}|}\}$ where $s_i \in \mathcal{A}$, $\forall\ 1 \leq i \leq |\mathbf{S}|$. The set of all feasible schedules can then be characterized as a function of the individual schedules as outlined in the following lemma.
\begin{lemma}
$\mathcal{F}$, the set of schedule that satisfy constraints C1 and C2, can be defined mathematically as follows:
\begin{align}
&\mathcal{F}= \{ \mathbf{S} \in \mathcal{P}(\mathcal{A}) \text{ such that } \forall \ s \neq s^\prime \in \nonumber \mathbf{S}\\
& \delta(\varphi_u(s) - \varphi_u(s^\prime)) \varphi_b(s) = \varphi_b(s^\prime) \delta(\varphi_u(s) - \varphi_u(s^\prime)), \label{eq3}  \\
& (\varphi_b(s),\varphi_z(s)) \neq (\varphi_b(s^\prime),\varphi_z(s^\prime)) \label{eq4},  \\
&|\mathbf{S}| = Z_{\text{tot}}  \label{eq5}  \}.
\end{align}
\label{th1}
\end{lemma}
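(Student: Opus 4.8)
The plan is to prove the set equality by establishing, for an arbitrary schedule $\mathbf{S} \in \mathcal{P}(\mathcal{A})$, that $\mathbf{S}$ satisfies the feasibility constraints C1 and C2 of \eref{constraint2}--\eref{constraint3} if and only if it simultaneously satisfies \eref{eq3}, \eref{eq4}, and \eref{eq5}. To this end, I would first make the bridge between a schedule and the binary variables explicit: to every $\mathbf{S}$ associate the assignment $X_{ubz}=1$ whenever $(u,b,z)\in\mathbf{S}$ and $X_{ubz}=0$ otherwise, with $Y_{ub}$ defined as in \eref{constraint2}. This identification is a bijection between $\mathcal{P}(\mathcal{A})$ and the set of binary tensors $X$, so feasibility of $\mathbf{S}$ is exactly feasibility of the associated $X$, and the proof reduces to translating each constraint on $X$ into the corresponding condition on the elements of $\mathbf{S}$.

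For constraint C1, I would unpack the discrete Dirac factor in \eref{eq3}. Since $\delta(\varphi_u(s)-\varphi_u(s'))$ equals $1$ exactly when $s$ and $s'$ involve the same user and $0$ otherwise, \eref{eq3} is vacuously true for pairs with distinct users and reduces to $\varphi_b(s)=\varphi_b(s')$ for pairs sharing a user. Hence \eref{eq3} holds for all $s\neq s'\in\mathbf{S}$ if and only if every user appearing in $\mathbf{S}$ is associated with a single BS, which is precisely $\sum_b Y_{ub}\leq 1$ of \eref{constraint1}, once one notes that a user assigned to two distinct BSs would force two of the $Y_{ub}$ to equal one. For the \emph{at most one user per PZ} half of C2, I would observe that \eref{eq4} forbids two distinct elements of $\mathbf{S}$ from sharing the same $(\varphi_b,\varphi_z)$ pair, i.e. $\sum_u X_{ubz}\leq 1$ for every $(b,z)$.

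The remaining, and most delicate, step is to recover the full equality $\sum_u X_{ubz}=1$ of \eref{constraint3} from the cardinality condition \eref{eq5}. Here I would argue by counting: there are exactly $Z_{\text{tot}}=|\mathcal{B}|\times Z$ admissible pairs $(b,z)$, and \eref{eq4} guarantees that the map $s\mapsto(\varphi_b(s),\varphi_z(s))$ is injective on $\mathbf{S}$. A schedule with $|\mathbf{S}|=Z_{\text{tot}}$ distinct pairs therefore exhausts all $Z_{\text{tot}}$ possible PZ slots, so every $(b,z)$ is covered exactly once, giving $\sum_u X_{ubz}=1$; conversely, C2 forces each PZ to appear exactly once, which yields both the injectivity \eref{eq4} and the count $|\mathbf{S}|=Z_{\text{tot}}$. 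I expect this pigeonhole-style coupling of \eref{eq4} and \eref{eq5} to be the main obstacle, since it is the only place where \emph{exactly one} is not read off a single condition but emerges from the joint effect of distinctness and cardinality; the two preceding translations are essentially bookkeeping on the definitions of $\varphi_u,\varphi_b,\varphi_z$ and $\delta$.
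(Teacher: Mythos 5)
Your proposal is correct and follows essentially the same route as the paper's proof: constraint C1 is translated into \eref{eq3} by unpacking the discrete Dirac factor, and constraint C2 is split into distinctness of the $(\varphi_b,\varphi_z)$ pairs plus a counting identity that equates $|\mathbf{S}|$ with $Z_{\text{tot}}$. The only cosmetic difference is that you make the converse direction (distinctness plus $|\mathbf{S}|=Z_{\text{tot}}$ implies full coverage of all PZ slots) explicit via a pigeonhole argument, whereas the paper derives $|\mathbf{S}|=Z_{\text{tot}}$ from the disjoint-union cardinality and leaves the reverse implication implicit.
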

\begin{proof}
The proof of this lemma can be found in \appref{ap1}.
\end{proof}

Based on the constraints above, the \emph{scheduling graph} $\mathcal{G}(\mathcal{V},\mathcal{E})$ can then be constructed  as follows: Generate a vertex $v$ for all possible associations $s \in \mathcal{A}$. Two distinct vertices $v_1$ and $v_2$ in $\mathcal{V}$ are connected by an edge in $\mathcal{E}$ if the following conditions hold:
\begin{itemize}
\item C1: if $\varphi_u(v_1) = \varphi_u(v_2)$ then $\varphi_b(v_1) = \varphi_b(v_2)$: this condition states that the same user cannot connect to multiple BSs.
\item C2: $(\varphi_b(v_1),\varphi_z(v_1)) \neq (\varphi_b(v_2),\varphi_z(v_2))$: this constraint states that two different users cannot be connected to the same PZ.
\end{itemize}

\fref{fig:graph} shows an example of the \emph{scheduling graph} in a system with $U=2$ users, $B=2$ BSs and $Z=2$ PZs. In this example, each vertex is labeled $ubz$, where $u$, $b$ and $z$ represent the indices of users, BSs and PZs respectively. We clearly see that the only possible cliques of size $Z_{\text{tot}} = B.Z=4$ are $\{\{111,112,221,222\},\{121,122,211,212\}\}$.
\begin{figure}[t]
\centering
  \includegraphics[width=.65\linewidth]{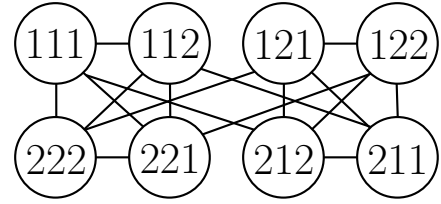}\\
  \caption{Example of scheduling graph for $2$ users, $2$ BSs and $2$ PZs.}\label{fig:graph}
\end{figure}

\subsection{Optimal Assignment Solution}

Define the function $\mathfrak{g}$ from $\mathcal{A}$ to $\mathds{R}$ as the benefit of each individual association $s_i$, i.e.: $g(s_i)=a_{ubz}$ $\forall s_i \in \mathcal{A}$, where $(u,b,z)$ is the tuple corresponding to the association $s_i$, i.e. $(u,b,z)=(\varphi_u(s_i),\varphi_b(s_i),\varphi_z(s_i))$. The original optimization problem (\ref{Original_optimization_problem}) can then be reformulated as follows:

\begin{eqnarray}
\label{generalized_optimization_problem}
& \max &  \sum_{i=1}^{|\mathbf{S}|} \mathfrak{g}(s_i) \\
& {\rm s.t.\ } & \mathbf{S} \in \mathcal{F},\nonumber
\end{eqnarray}
where the maximization is over the set of all feasible schedules $\mathbf{S} \in \mathcal{F}$, where $\mathcal{F}$ is defined in lemma 1.

For example, if the utility function is the sum-rate function, then the optimal scheduling problem can be written as:
\begin{align}
\mathbf{S}^* &= \underset{\mathbf{S} \in \mathcal{F}}{\text{argmax }} \sum_{i=1}^{|\mathbf{S}|} \mathfrak{g}(s_i)  \nonumber \\
&=\underset{\mathbf{S} \in \mathcal{F}}{\text{argmax }} \sum_{i=1}^{|\mathbf{S}|} \log_2 \big( 1+\text{SINR}^{\varphi_u(s_i)}_{\varphi_b(s_i)\varphi_z(s_i)}\big).
\label{eq6}
\end{align}

Consider the scheduling graph $\mathcal{G}(\mathcal{V},\mathcal{E})$ associated with the constraints C1 and C2, as constructed in subsection III-A. Then, define $\mathcal{C}$ as the set of all possible cliques with degree $Z_{\text{tot}}$. The problem (\ref{generalized_optimization_problem}) can then be written as a maximum weight clique problem, as highlighted in the following theorem.

\begin{theorem}
The scheduling problem of associating users to power-zones (\ref{generalized_optimization_problem}) can be written as:
\begin{align}
\mathbf{S}^* &= \underset{\mathbf{S} \in \mathcal{F}}{\text{argmax }} \sum_{i=1}^{|\mathbf{S}|}  \mathfrak{g}(s_i)  \nonumber \\
&= \underset{\mathbf{C} \in \mathcal{C}}{\text{argmax }} \sum_{i=1}^{|\mathbf{C}|} w(v_i),
\end{align}
where $\mathbf{C}=\{v_1,\ \cdots,\ v_{|\mathbf{C}|}\}\in \mathcal{C}$ \ is a clique in the \emph{scheduling graph}, and $w(v_i)$ is the weight of each vertex $v_i$, $\forall \  1\leq i \leq |\mathbf{C}|$. In other words, the optimal solution of the scheduling problem (\ref{generalized_optimization_problem}) is the maximum weight clique of degree $Z_{\text{tot}}$ in the \emph{scheduling graph} where the weight of each vertex $v_i \in \mathcal{V}$ associated with $s_i \in \mathcal{A}$ is defined as:
\begin{align}
w(v_i) = \mathfrak{g}(s_i) .
\label{eq2}
\end{align}
\label{th2}
\end{theorem}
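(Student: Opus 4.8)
The plan is to establish a one-to-one correspondence between the feasible schedules in $\mathcal{F}$ and the cliques of size $Z_{\text{tot}}$ in the scheduling graph $\mathcal{G}(\mathcal{V},\mathcal{E})$, and then to observe that under this correspondence the two objective functions coincide term by term. Since each vertex $v_i \in \mathcal{V}$ is generated from exactly one association $s_i \in \mathcal{A}$, the identification $v_i \leftrightarrow s_i$ gives a natural bijection between subsets of $\mathcal{V}$ and subsets of $\mathcal{A}$; the heart of the argument is to show that this bijection maps the family $\mathcal{C}$ of $Z_{\text{tot}}$-cliques onto the feasible family $\mathcal{F}$ of \lref{th1}.

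First I would unwind the edge rule of $\mathcal{G}$ and match it against the pairwise conditions of \lref{th1}. The construction places an edge between two distinct vertices $v_1,v_2$ precisely when (C1) $\varphi_u(v_1)=\varphi_u(v_2)$ implies $\varphi_b(v_1)=\varphi_b(v_2)$, and (C2) $(\varphi_b(v_1),\varphi_z(v_1))\neq(\varphi_b(v_2),\varphi_z(v_2))$. The only slightly delicate step is to check that condition C1 is the same relation as the Dirac-$\delta$ identity \eref{eq3}: evaluating \eref{eq3} in the case $\varphi_u(s)\neq\varphi_u(s')$ makes both sides vanish, while in the case $\varphi_u(s)=\varphi_u(s')$ it reduces to $\varphi_b(s)=\varphi_b(s')$, so \eref{eq3} holds for a pair if and only if C1 does. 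Condition C2 is literally \eref{eq4}. Hence two vertices are adjacent in $\mathcal{G}$ exactly when the corresponding pair of associations satisfies both pairwise requirements appearing in the definition of $\mathcal{F}$.

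Next I would pass from pairs to whole sets. By definition a clique is a vertex subset in which every pair is adjacent, so under $v_i \leftrightarrow s_i$ a clique corresponds to a subset $\mathbf{S}\subseteq\mathcal{A}$ in which every pair $s\neq s'$ satisfies \eref{eq3} and \eref{eq4}; imposing the clique-size requirement $|\mathbf{C}|=Z_{\text{tot}}$ supplies exactly the cardinality condition \eref{eq5}. Comparing with \lref{th1}, the image of $\mathcal{C}$ is therefore precisely $\mathcal{F}$, and the map is a bijection since distinct cliques are distinct vertex sets and hence distinct association sets. This reduces the feasible region of \eref{generalized_optimization_problem} to $\mathcal{C}$ with no loss.

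Finally, the objectives agree: by the weight definition \eref{eq2}, $w(v_i)=\mathfrak{g}(s_i)$, so for a matched pair $(\mathbf{C},\mathbf{S})$ we obtain $\sum_{i}w(v_i)=\sum_{i}\mathfrak{g}(s_i)$ term by term. Maximizing $\sum_i \mathfrak{g}(s_i)$ over $\mathbf{S}\in\mathcal{F}$ is thus identical to maximizing $\sum_i w(v_i)$ over $\mathbf{C}\in\mathcal{C}$, and the two argmax sets correspond under the bijection, which yields the claimed equality. I expect the main (and essentially only) obstacle to be the careful verification that the graph's edge rule C1 and the $\delta$-based constraint \eref{eq3} encode the same pairwise relation; once that equivalence and the size bookkeeping are in place, the theorem follows as a direct translation between the two formulations.
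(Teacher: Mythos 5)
Your proposal is correct and follows essentially the same route as the paper's own proof: establishing a one-to-one correspondence between feasible schedules in $\mathcal{F}$ and cliques of degree $Z_{\text{tot}}$ in $\mathcal{C}$ by matching the edge conditions C1 and C2 against the pairwise constraints \eref{eq3} and \eref{eq4} plus the cardinality condition \eref{eq5}, and then noting that $w(v_i)=\mathfrak{g}(s_i)$ makes the two objectives agree term by term. Your explicit case analysis of the Dirac-$\delta$ identity is a slightly more careful spelling-out of a step the paper treats as immediate, but the argument is the same.
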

\begin{proof}
The proof of this theorem can be found in \appref{ap2}
\end{proof}
Maximum weight clique problems are NP-hard problems in general. There exist, however, efficient algorithms to solve the problem; see \cite{16513519,13265492} and references therein. In the simulations part of this paper, the optimal coordinated scheduling algorithm resulting from solving problem (\ref{eq2}) is denoted by ``OPT-SHD".

\begin{remark}
Note that if the problem (\ref{Original_optimization_problem}) allows PZs not to serve users (i.e., replace the equality (\ref{constraint4}) with an inequality (also called as the blanking solution)), the search space of the optimization problem becomes larger, which typically increases the value of the optimal solution. In fact, when PZs are allowed not to serve users, it can be shown that the problem becomes a generic maximum weight clique in the scheduling graph, rather than the maximum weight clique of degree $Z_{\text{tot}}$. It is, however, more computationally efficient to discover the $Z_{\text{tot}}$th-maximum-clique in a graph rather than finding the maximum weight clique \cite{degree_clique}. Moreover, blanking solution is never encountered in the simulations of this paper (i.e., even after replacing the equality (\ref{constraint3}) with an inequality, the inequality remains tight at the optimal).
\label{r1}
\end{remark}

\subsection{Heuristics For Coordinated Scheduling}

To solve problem (\ref{Original_optimization_problem}) in linear time with the problem size ($U \times B \times Z$), a simple heuristic is proposed in this section. First, construct the graph $\mathcal{G}$. The idea here is to sequentially update the schedule $\mathbf{S}$ by adding the vertex with the highest weight at each step. Then, the graph is updated by removing all vertices not connected to the selected vertex, so as to guarantee that the constraints C1 and C2 are satisfied. The process is repeated until the graph becomes empty. The steps of the heuristic are summarized in \algref{algo1}, which is also denoted by ``HEU-SHD" in the simulations part of the paper.

\begin{algorithm}[t]
\begin{algorithmic}
\REQUIRE $\mathcal{U}$, $\mathcal{B}$, $\mathcal{Z}$, $P_{bz}$, and $h_{bz}^{u},\ \forall \ u \in \mathcal{U},\ \forall \ b \in \mathcal{B},\ \forall\ z \in \mathcal{Z}$
\STATE Initialize $\mathbf{S} = \varnothing$.
\STATE Construct $\mathcal{G}$ using subsection III-A.
\STATE Compute $w(v), \ \forall \ v \in \mathcal{G}$ using \eref{eq2}.
\WHILE{$\mathcal{G} \neq \varnothing$}
\STATE Select $v^* = \text{argmax }_{v \in \mathcal{G}} w(v)$.
\STATE Set $\mathbf{S} = \mathbf{S} \cup \{v^*\}$
\STATE Set $\mathcal{G}= \mathcal{G}(v^*)$ where $ \mathcal{G}(v^*)$ is the sub-graph of $\mathcal{G}$ containing only the vertices adjacent to $v^*$.
\ENDWHILE
\STATE Output $\mathbf{S}$.
\end{algorithmic}
\caption{HEU-SHD}
\label{algo1}
\end{algorithm}


To further reduce the complexity of the algorithm, it is also possible to utilize a subset of the graph, instead of the entire $U . B . Z$ associations. In other terms, the benefits of some of the associations may be low and do not contribute much to network-wide utility. The idea of such heuristic is then to only consider the $\lfloor p . U . B . Z \rfloor$ associations having the \emph{highest} benefits to the system, where $\lfloor{.}\rfloor$ represents the floor operator and $p$ is the fraction of considered associations ($0 < p \leq 1$). The maximum weight clique algorithm is then performed on the newly generated smaller size graph. In general, the size of the clique may not be $Z_{\text{tot}}$. To reach a clique of the wanted size, the removed associations are reconsidered to complete the clique. The performance of such lower complexity heuristic scheduling, denoted by ``$p$-SHD", clearly depends on the choice of the parameter $p$. As shown in the simulations in next section, however, $p= 0.3$ already works quite well.

%
%

\section{Simulation Results} \label{sec:sim}

This section shows the performance of the proposed coordinated scheduling algorithms in the downlink of a cloud-radio access network, similar to \fref{fig:network}. The cell-to-cell distance is set to 500 meters. The number of users, numbers of base-stations, and number of power-zones per BS frame vary in the simulations so as to study the methods performance for various
scenarios. Additional simulations parameters are summarized in Table \ref{t1}. For illustration purposes, the simulations focus on the sum-rate maximization problem, i.e. problem (\ref{eq6}).

The optimal scheduling solution denoted by ``OPT-SHD", the heuristic scheduling solution denoted by ``HEU-SHD" and the lower complexity heuristic scheduling solution denoted by ``$p$-SHD" are simulated in this section. First, \fref{fig:K} plots the sum-rate versus the number of users for a CRAN composed of 3 base-stations and 4 power-zones per frame. The figure shows how, for a high shadowing environment, the optimal scheduling outperforms the heuristic solution, particularly for large number of users. This is due to the fact that as the number of users increases, interference becomes larger especially in strong shadowing environments, and so the role of coordinated scheduling as an interference mitigation technique becomes more pronounced. In a low shadowing environment, however, the interference is relatively lower and the performance of the heuristic method becomes similar to the optimal solution.

\fref{fig:N} plots the sum-rate versus the number of zones per BS frame, for a CRAN composed of 3 base-stations and 5 users.
 Again, for a high shadowing environment, the optimal scheduling outperforms the heuristic solution as the number of zones per frame increases, since the size of the search space becomes larger, which comes in the favor of the optimal algorithm OPT-SHD.

To show the performance of the lower complexity heuristic scheduling algorithm $p$-SHD, \fref{fig:H} plots the sum-rate as a function of the fraction of associations $p$ in a network formed by 4 base-stations, 5 users, and 4 power-zones per frame. The figure shows that for a suitable choice of $p$, the performance of $p$-SHD is already similar to the more generalized heuristic HEU-SHD. \fref{fig:H} again shows how the performance of all the proposed methods, i.e. OPT-SHD, HEU-SHD and $p$-SHD, becomes similar for low shadowing environment.

Finally, to quantify the performance of the proposed algorithms in a larger network, \fref{fig:L} plots the sum-rate as a function of the number of users in a network composed by 21 base-stations and 5 power-zones per BS's transmit frame. The figure shows that, for such a large network, even using $p=0.14$ already performs as good as the more generalized heuristic HEU-SHD. The degradation in performance is only $1\%$ when $p=0.07$, and $12\%$ when $p=0.035$, which is negligible given the simple computational complexity of $p$-SHD.

\begin{table}
\centering
\caption{System model parameters}
\begin{tabular}{|c|c|}
\hline
Cellular Layout & Hexagonal \\
\hline
Number of BSs & Variable \\
\hline
Number of PZs & Variable \\
\hline
Number of Users & Variable \\
\hline
Cell-to-Cell Distance & 500 meters \\
\hline
Path Loss Model & SUI-3 Terrain type B \\
\hline
Channel Estimation & Perfect \\
\hline
High Power & -42.60 dBm/Hz \\
\hline
Background Noise Power & -168.60 dBm/Hz \\
\hline
Bandwidth & 10 MHz \\
\hline
\end{tabular}
\label{t1}
\end{table}

\begin{figure}[t]
\centering
  \includegraphics[width=.85\linewidth]{./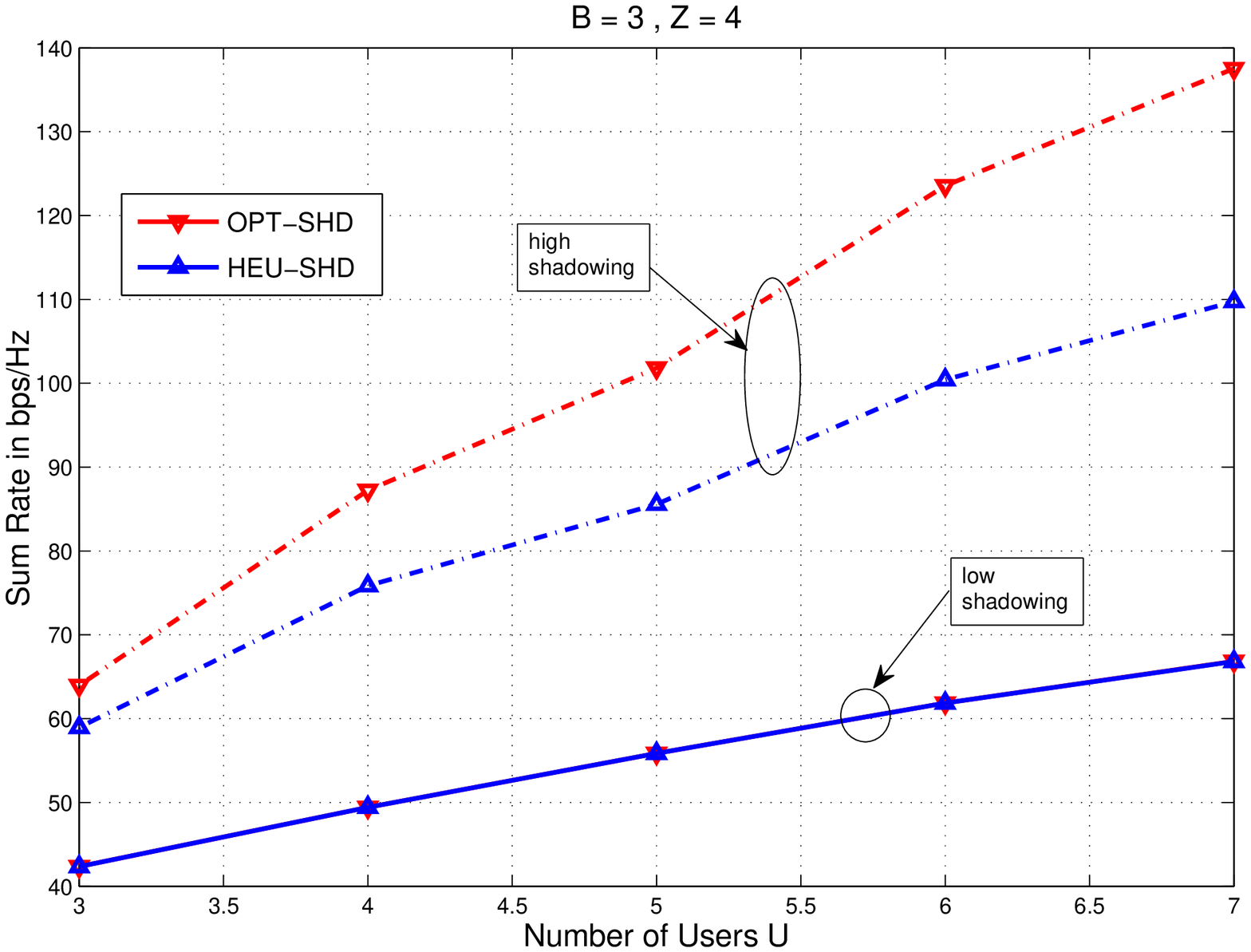}\\
  \caption{Sum-rate in bps/Hz versus number of users $U$. Number of base-stations is 3, with 4 power-zones per BS's transmit frame.}\label{fig:K}
\end{figure}
\begin{figure}[t]
\centering
  \includegraphics[width=.85\linewidth]{./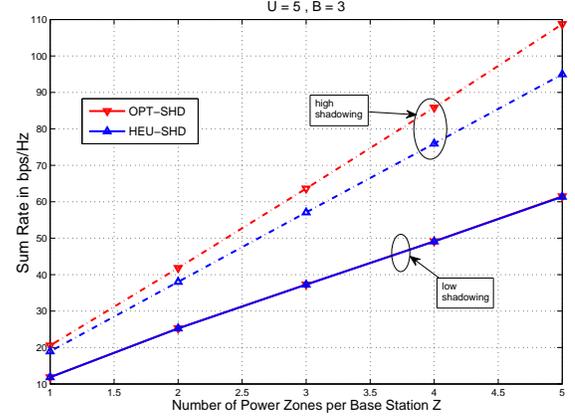}\\
  \caption{Sum-rate in bps/Hz versus number of power-zones $Z$. Number of base-stations is 3. Number of users is 5.}\label{fig:N}
\end{figure}

\begin{figure}[t]
\centering
  \includegraphics[width=.85\linewidth]{./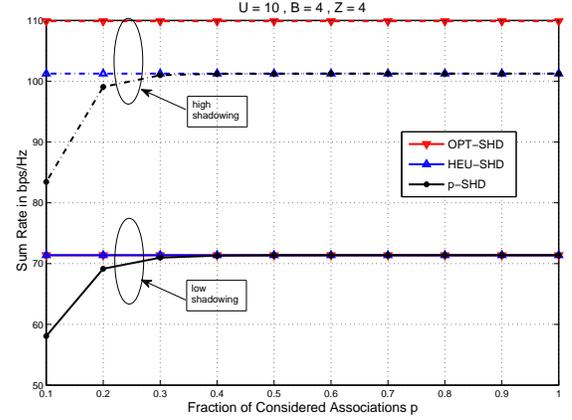}\\
  \caption{Sum-rate in bps/Hz versus fraction of considered associations $p$. Number of base-stations is 4, with 4 power-zones per BS's transmit frame. Number of users is 5.}\label{fig:H}
\end{figure}

\begin{figure}[t]
\centering
  \includegraphics[width=0.85\linewidth]{./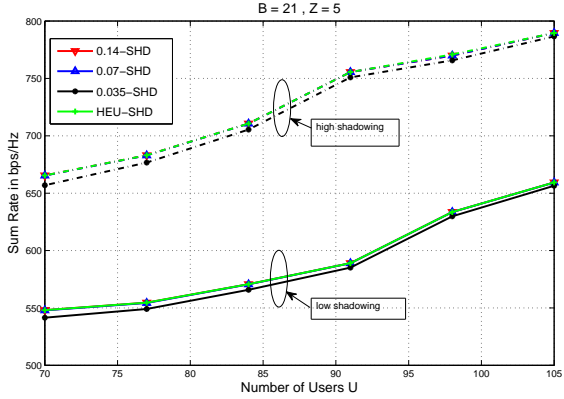}\\
  \caption{Sum-rate in bps/Hz versus number of users $U$. Number of base-stations is 21, with 5 power-zones per BS's transmit frame.}\label{fig:L}
\end{figure}

\section{Conclusion} \label{sec:conc}

This paper considers the coordinated scheduling problem in the downlink of cloud-radio access networks. The paper considers the problem of maximizing a network-wide utility under the practical constraint that each user cannot be served by more than one base-station, but can be served by one or more power-zone within each base-station frame. The paper solves the problem by introducing the scheduling graph in which each vertex represents an association of users, power-zones and base-stations, and then reformulating the problem as a maximum weight clique problem. The paper further presents heuristic algorithms with low computational complexity. Simulation results show the performance of the proposed algorithms and suggest that the heuristic algorithms perform near optimal for low shadowing environment.

\appendices

\numberwithin{equation}{section}

\section{Proof of \lref{th1}} \label{ap1}

Let $\mathbf{S}= \{ s_1, \ \cdots,\ s_{|\mathbf{S}|}\}$ be a schedule satisfying the second constraint C2. The mathematical formulation of this constraint is:
\begin{align}
& (\varphi_b(s_i),\varphi_z(s_i)) \neq (\varphi_b(s_j),\varphi_z(s_j)),\ \forall \ 1 \leq i \neq j \leq |\mathbf{S}| \nonumber \\
&\left|\bigcup_{i=1}^{|\mathbf{S}|}\biggl\{(\varphi_b(s_i),\varphi_z(s_i))\biggr\} \right| = Z_{\text{tot}}.
\end{align}
Since all the $\biggl\{\varphi_b(s_i),\varphi_z(s_i)\biggr\}$ are disjoint, we can write:
\begin{align}
Z_{\text{tot}} = \left|\bigcup_{i=1}^{|\mathbf{S}|} \biggl\{(\varphi_b(s_i),\varphi_z(s_i))\biggr\} \right| &= \sum_{i=1}^{|\mathbf{S}|} \left|\biggl\{(\varphi_b(s_i),\varphi_z(s_i))\biggr\}\right| \nonumber \\
&= \sum_{i=1}^{|\mathbf{S}|} 1 = |\mathbf{S}|.
\end{align}
Therefore, we can write the following:
\begin{align}
& (\varphi_b(s),\varphi_z(s)) \neq (\varphi_b(s^\prime),\varphi_z(s^\prime)),\ \forall \ s \neq s^\prime \in \mathbf{S}, \nonumber\\
&|\mathbf{S}| = Z_{\text{tot}} .
\end{align}
If $\mathbf{S}$ further satisfies the first constraint C1, we can write:
\begin{align}
\varphi_u(s_i) = \varphi_u(s_j) \Rightarrow \varphi_b(s_i) = \varphi_b(s_j), \forall \ 1 \leq i \leq |\mathbf{S}|. \nonumber
\end{align}
The above condition can simply be rewritten as:
\begin{align}
\delta(\varphi_u(s_i) - \varphi_u(s_j)) \varphi_b(s_i) = \varphi_b(s_j) \delta(\varphi_u(s_i) - \varphi_u(s_j)). \nonumber
\end{align}
Thus, $\mathcal{F}$, defined as the schedules satisfying C1 and C2, can be written as:
\begin{align}
&\mathcal{F}= \{ \mathbf{S} \in \mathcal{P}(\mathcal{A}) \text{ such that } \forall \ s \neq s^\prime \in \mathbf{S}  \nonumber\\
& \delta(\varphi_u(s) - \varphi_u(s^\prime)) \varphi_b(s) = \varphi_b(s^\prime) \delta(\varphi_u(s) - \varphi_u(s^\prime)),  \nonumber \\
& (\varphi_b(s),\varphi_z(s)) \neq (\varphi_b(s^\prime),\varphi_z(s^\prime)),   \nonumber \\
&|\mathbf{S}| = Z_{\text{tot}}   \}.
\end{align}

\section{Proof of \thref{th2}} \label{ap2}

To prove \thref{th2}, we show that there is a one to one map between $\mathcal{F}$ (i.e., the set of feasible schedules) and $\mathcal{C}$ (i.e., the set of cliques of degree $Z_{\text{tot}}$ in the \emph{scheduling graph} $\mathcal{G}(\mathcal{V},\mathcal{E})$).
We first prove that $\forall \mathbf{C} \in \mathcal{C}$,  $\mathbf{C}$ satisfies the constraints \eref{eq3}, \eref{eq4} and \eref{eq5}. Then, we prove the converse: i.e. for each element in $\mathbf{S} \in \mathcal{F}$, there exists an associated clique $\mathbf{C} \in \mathcal{C}$. To conclude the proof, we show that the weight of the clique is equivalent to the merit function of the optimization problem defined in \eref{generalized_optimization_problem}.

Let $\mathbf{C} = \{v_1,\ \cdots,\ v_{|\mathbf{C}|}\}$ be a clique in the \emph{scheduling graph} $\mathcal{G}(\mathcal{V},\mathcal{E})$ (i.e., $\mathbf{C} \in \mathcal{C}$) . Since $\mathbf{C}$ is a clique in $\mathcal{G}$, there exists an edge in $\mathcal{E}$ for every pair of vertices in $\mathcal{V}$. From the first condition C1 of creating an edge between two vertices, we have:
\begin{align*}
\varphi_u(v_i) = \varphi_u(v_j) \text{ and } \varphi_b(v_i) &= \varphi_b(v_j),\ \forall \ 1 \leq i \neq j \leq |\mathbf{C}|, \nonumber  \\
\delta(\varphi_u(v_i) - \varphi_u(v_j))\varphi_b(v_i) &= \varphi_b(v_j)\delta(\varphi_u(v_i) - \varphi_u(v_j)).
\end{align*}
Hence, the clique satisfies \eref{eq3}. In a similar way, the second condition C2 of the connectivity gives the following:
\begin{align*}
(\varphi_b(v_i),\varphi_z(v_i)) \neq (\varphi_b(v_j),\varphi_z(v_j)),\ \forall \ 1 \leq i \neq j \leq |\mathbf{C}|. \nonumber
\end{align*}
Therefore, for any $\mathbf{C} = \{v_1,\ \cdots,\ v_{|\mathbf{C}|}\} \in \mathcal{C}$, we can construct a schedule $\mathbf{S}$, such that $\mathbf{S} = \{s_1,\ \cdots,\ s_{|\mathbf{C}|}\} \in \mathcal{P}(\mathcal{A})$, where $v_i$ is the vertex associated with each association $s_i, \ \forall \ 1 \leq i \leq |\mathbf{C}|$, and where $\mathbf{S}$ satisfies the following constraints:
\begin{align}
\mathbf{S}& \in \mathcal{P}(\mathcal{A}), \text{ and } \forall \ s \neq s^\prime \in \mathbf{S}  \nonumber\\
& \delta(\varphi_u(s) - \varphi_u(s^\prime)) \varphi_b(s) = \varphi_b(s^\prime) \delta(\varphi_u(s) - \varphi_u(s^\prime)),  \nonumber \\
& (\varphi_b(s),\varphi_z(s)) \neq (\varphi_b(s^\prime),\varphi_z(s^\prime)),   \nonumber \\
&|\mathbf{S}| = |\mathcal{C}| = Z_{\text{tot}}.
\label{conclusion1}
\end{align}
The conclusion in \ref{conclusion1} shows that $\mathbf{S} \in \mathcal{F}$.

In a similar manner, let $\mathbf{S} = \{s_1,\ \cdots,\ s_{|\mathbf{S}|}\} \in \mathcal{F}$, and let $\mathbf{C} = \{v_1,\ \cdots,\ v_{|\mathbf{C}|}\}$ where $v_i$ is the vertex associated with $s_i, \ \forall \ 1 \leq i \leq |\mathbf{C}|$. Due to conditions \eref{eq3} and \eref{eq4}, each pair of vertices are connected; thus, $\mathbf{C}$ is a clique. Further, \eref{eq5} guarantees that the size of the clique is $Z_{\text{tot}}$. This concludes the converse, i.e., $\mathbf{C} \in \mathcal{C}$.

Moreover, the weight of the clique $\mathbf{C} \in \mathcal{C}$ is simply given by:
\begin{align}
w(\mathbf{C}) = \sum_{i=1}^{|\mathbf{C}|} w(v_i) = \sum_{i=1}^{|\mathbf{C}|} \mathfrak{g}(s_i),
\end{align}
where $s_i$ is the association corresponding to vertex $v_i$. This implies that the weight of the clique is equivalent to the merit function of the optimization problem defined in \eref{generalized_optimization_problem}. Hence, the optimal scheduling is given by the maximum weight clique of degree $Z_{\text{tot}}$ in the \emph{scheduling graph}.

\begin{remark}
For the constraint \eref{eq3} and \eref{eq4}, we only require $\mathbf{C}$ to be a clique and not necessarily a clique of degree $Z_{\text{tot}}$. The clique of degree $Z_{\text{tot}}$ is a special case and also satisfies the constraints.
\label{r3}
\end{remark}

\bibliographystyle{IEEEtran}
\bibliography{IEEEabrv,citations}
\end{document}